\newtheorem{lemma}{Lemma}
\newcommand{\ket}[1]{| #1 \rangle}
\newcommand{\bra}[1]{\langle #1|}
\newcommand{\cantor}{\{0,1\}^\omega}
\newcommand{\uph}{\upharpoonright}
\newcommand{\NN}{\mathbb{N}}
\newcommand{\VV}{\mathbf{V}}
\DeclareMathOperator{\Tr}{Tr}
\newtheorem{observation}{Observation}
\begin{document}
	\title{Distinguishing computable mixtures of quantum states}
	\author{Ignacio H. L\'opez Grande}
	\affiliation{DEILAP, CITEDEF-CONICET, Villa Martelli, Buenos Aires, Argentina}
	\author{Gabriel Senno}
	\affiliation{ICFO-Institut de Ciencies Fotoniques, The Barcelona Institute of Science and Technology, 08860 Castelldefels (Barcelona), Spain}
	\author{Gonzalo de la Torre}
	\affiliation{ICFO-Institut de Ciencies Fotoniques, The Barcelona Institute of Science and Technology, 08860 Castelldefels (Barcelona), Spain}
	\author{Miguel A. Larotonda}
	\affiliation{DEILAP, CITEDEF-CONICET, Villa Martelli, Buenos Aires, Argentina}
	\author{Ariel Bendersky}
	\affiliation{Universidad de Buenos Aires, Facultad de Ciencias Exactas y Naturales, Departamento de Computación, 1428 Buenos Aires, Argentina}
\affiliation{CONICET-Universidad de Buenos Aires, Instituto de Investigación en Ciencias de la Computación (ICC),
1428 Buenos Aires, Argentina}
	\author{Santiago Figueira}
\affiliation{Universidad de Buenos Aires, Facultad de Ciencias Exactas y Naturales, Departamento de Computación, 1428 Buenos Aires, Argentina}
\affiliation{CONICET-Universidad de Buenos Aires, Instituto de Investigación en Ciencias de la Computación (ICC),
1428 Buenos Aires, Argentina}	
\author{Antonio Ac\'\i n}
	\affiliation{ICFO-Institut de Ciencies Fotoniques, The Barcelona Institute of Science and Technology, 08860 Castelldefels (Barcelona), Spain}
	\affiliation{ICREA--Institucio Catalana de Recerca i
	Estudis Avan\c{c}ats, Lluis Companys 23, 08010 Barcelona, Spain}

	\begin{abstract}
		%According to quantum mechanics, several mixtures of pure states can yield the same mixed state.
		In this article we extend results from our previous work [Bendersky, de la Torre, Senno, Figueira and Acín, Phys. Rev. Lett. 116, 230406 (2016)] by providing a protocol to distinguish in finite time and with arbitrarily high success probability any algorithmic mixture of pure states from the maximally mixed state. Moreover, we introduce a proof-of-concept experiment consisting in a situation where two different random sequences of pure states are prepared; these sequences are indistinguishable according to quantum mechanics, but they become distinguishable when randomness is replaced with pseudorandomness within the preparation process.

	\end{abstract}
	
	\pacs{03.67.-a, 03.65.Ud}
	
	\maketitle

	\section{Introduction}
%!TEX root = articulo.tex

With the advance of the experimental realization of quantum protocols, the most widely used class of setups consists of classical systems controlling quantum ones \cite{prevedel2007high,takeda2013deterministic,barrett2004deterministic,
takesue2005differential}. Being classical, the control systems are limited in the type of operations they can perform, and this has implications on what can be achieved by the setups they control.	
In particular, as it was shown in \cite{Bendersky2016}, if one intends to prepare a maximally mixed state by means of a computer pseudorandomly choosing pure states from a given basis, there is an algorithm that can distinguish such a preparation from an adequately prepared maximally mixed state, without any knowledge of the mixing procedure.

In this work we extend the ideas from \cite{Bendersky2016} in two ways. First, we generalize the theoretical result by showing that any preparation performed by a computer intended to generate the maximally mixed state, and not just those in which the states are chosen from a predefined basis, can be distinguished from the maximally mixed state. Second, we present a proof of concept experiment in which we distinguish two computable preparations that if carried out with randomness would be indistinguishable.
	
This article is organized as follows. First, we introduce the tools from the theory of algorithmic randomness which we will need later on. Second, we review the distinguishing protocol from \cite{Bendersky2016}. Third, we present its generalization to arbitrary computable preparations. Finally, we present results for a proof of concept experiment implementing a widely used scenario in which a pseudorandom function is used to pick pure states from a given basis.
	
	\section{Preliminaries}
%!TEX root = articulo.tex

Central to the distinguishing protocols we will describre in the following sections is the idea of an \emph{algorithmically random} sequence of symbols. Roughly, an infinite sequence of symbols from some finite alphabet $\Sigma$ is random in an algorithmic sense, if it lacks any regularity detectable by effective means. Randomness tests, also called \emph{Martin-L\"of tests (ML-tests)} \cite{MARTINLOF1966602} , are defined to detect some specific regularity. This `detection' of non-random sequences must be computably approximable, with incrementing levels of accuracy or significance. A test is a collection of sets $V_m$ of possible prefixes of sequences that do not look random. As we increase $m$, the identification of non-randomness gets more and more fine-grained, leaving in the limit a null measure set of non-random sequences. The \emph{Martin-L\"of random (ML-random)} sequences are those not detectable by any possible ML-test.

Formally, let $\Sigma^*$ be the set of all finite strings with symbols from $\Sigma$. A Martin-L\"of test is a sequence $(V_m)_{m\in\NN}$ of sets $V_m\subseteq\Sigma^*$ with two  properties:

\begin{enumerate}
\item {\em Effectiveness.} There is a Turing machine that given $m$ and $i$, produces the $i$-th string of $V_m$ (notice that in general there are infinitely many strings in $V_m$). It is not possible to computably determine if a string {\em is not} in $V_m$, but we can computably enumerate all strings that are in.\label{prop:effectiveness}

\item {\em Null class.} Let $\lambda$ be the uniform measure on the space $\Sigma^\omega$ of infinite sequences with symbols from $\Sigma$ and, for $A\subseteq\Sigma^*$, let $[A]\subseteq\Sigma^\omega$ denote the set of sequences with prefixes in $A$. Then, we require each ML-test $(V_m)_{m\in\NN}$ to satisfy $\lambda [V_m]\leq |\Sigma|^{-m}$.
\end{enumerate}

We say that a sequence $Y\in\Sigma^\omega$ %If $Y\in\bigcap_m[V_m]$ we say that $Y$ {\em fails} the test $(V_m)_{m\in\NN}$; otherwise we say that $Y$ {\em passes} the test.
%An infinite binary sequence $Y$ is \emph{Martin-L\"of random (ML random)} iff $Y$ passes every ML test. That is,
 is {\em ML-random} if no ML-test $(V_m)_{m\in\NN}$ can capture $Y$ in {\em all} its levels of accuracy, that is if for no test $(V_m)_{m\in\NN}$ we have $Y\in\bigcap_m[V_m]$. Informally, if $Y\in [V_m]$ then we reject the hypothesis that $Y$ is random with significance level $|\Sigma|^{-m}$.

One of the most important features of the theory of Martin-L\"of randomness is the existence of a universal ML-test, i.e. a test $(U_m)_{m\in\NN}$ such that a sequence $Y\in\Sigma^{\omega}$ is ML-random iff $Y\not\in\bigcap_m[U_m]$. Since $\lambda\bigcap_m[U_m]=0$, this implies that the set of ML-random sequences has measure 1. In other words, the sequence of independent throws of a $|\Sigma|$-faced dice is ML-random with probability~1.

Let $Y\upharpoonright n$ denote the prefix of length $n$ of the sequence~$Y$. Observe that, although $U_m=\{s_1,s_2,\dots\}$ will, in general, be infinite, if $Y\in[U_m]$ then for large enough $n$ we have
that all the infinite sequences extending $Y\upharpoonright n$ belong to $[\{s_1,\dots,s_n\}]$. This last expression can be seen as the $n$-th approximation of $[\{s_1,s_2,\dots\}]$. Hence if $Y\in\bigcap_m[U_m]$, then for every $m$ there is $n$ such that any extension of $Y\upharpoonright n$ is included in the $n$-th approximation of $[U_m]$.

Intuitively, we expect a random sequence $Y\in\Sigma^\omega$ to satisfy the law of large numbers,
\begin{align}\label{lawlargenumb}
\lim_n \frac{|\{i<n~|~Y(i)=b\}|}{n}=\frac{1}{|\Sigma|}\mbox{ for all }b\in\Sigma.
\end{align}
Furthermore, it is natural to ask of random sequences that there be no algorithmic way of \emph{selecting} some subsequence
of it not satisfying \eqref{lawlargenumb} (say, for instance, a subsequence of all $0$s in the binary case). This property, known
as \emph{Church stochasticity} \cite{church1940concept}, is satisfied by ML-random sequences (see, e.g. \cite[Section 2.5.]{LVBook}) and we will use this fact in what follows.

	\section{Distinguishing pseudomixtures of quantum states}
%!TEX root = articulo.tex

In~\cite{Bendersky2016} we considered a scenario with two players, Alice and Bob, in which, first, Alice fixes a qubit basis, either the $\sigma_z$ basis or the $\sigma_x$ basis, and then, upon Bob's successive requests, pseudorandomly picks an eigenstate from the chosen basis and sends it to him. We gave a protocol for Bob to
distinguish the (initially unknown to him) preparation basis in finite time and with arbitrarily high
success probability. This implies that it is incorrect to characterize Bob's lack of
knowledge about the preparation basis with the maximally mixed state as one would do if Alice were using randomness.

The protocol followed by Bob has two steps. First, he alternatively measures the qubits being sent by Alice in the $\sigma_x$ and $\sigma_z$ basis. This generates two binary sequences: $X$ and $Z$ (see Fig. \ref{fig:alternation} for a schematic description). When he measures in the preparation basis, the corresponding sequence will be a subsequence (either the odd or the even positions) of the pseudorandom sequence being used by Alice; when he measures in the other basis, the resulting bits are, according to quantum mechanics, independent flips of a fair coin and, therefore, they give rise to a ML-random sequence with probability $1$. In the second step of the
protocol, Bob uses a universal ML-test $(U_m)_{m\in\NN}$ to distinguish between these two kind of sequences and hence find out the preparation basis. Namely, given a desired probability of error $\epsilon$, he computes $m:=\min_k [2^{-k}\leq\epsilon]$ and starts enumerating all the strings in $U_m=\{s_1,s_2,\dots\}$ until he finds some $n$ such that for $Y=Z$ or $Y=X$ it happens that
$$
[Y\upharpoonright n]\subseteq \bigcup_{i\leq n}[s_i],
$$
after which he claims that the box producing $Y$ is the one with the computer.
Since either $X$ or $Z$ is computable, and hence not ML-random, the last condition has to be satisfied for sufficiently large $n$. His claim is wrong when the ML-random sequence was captured by $[U_m]$ before the computable one was (of course, for some $m'>m$ the ML-random sequence would be out of $[U_{m'}]$). Hence, the probability of making this error is at most the probability for the coin flipping sequence to be inside $[U_m]$, and this is at most $2^{-m}\leq\epsilon$.

\begin{figure}[ht]
\includegraphics[width=0.85\columnwidth]{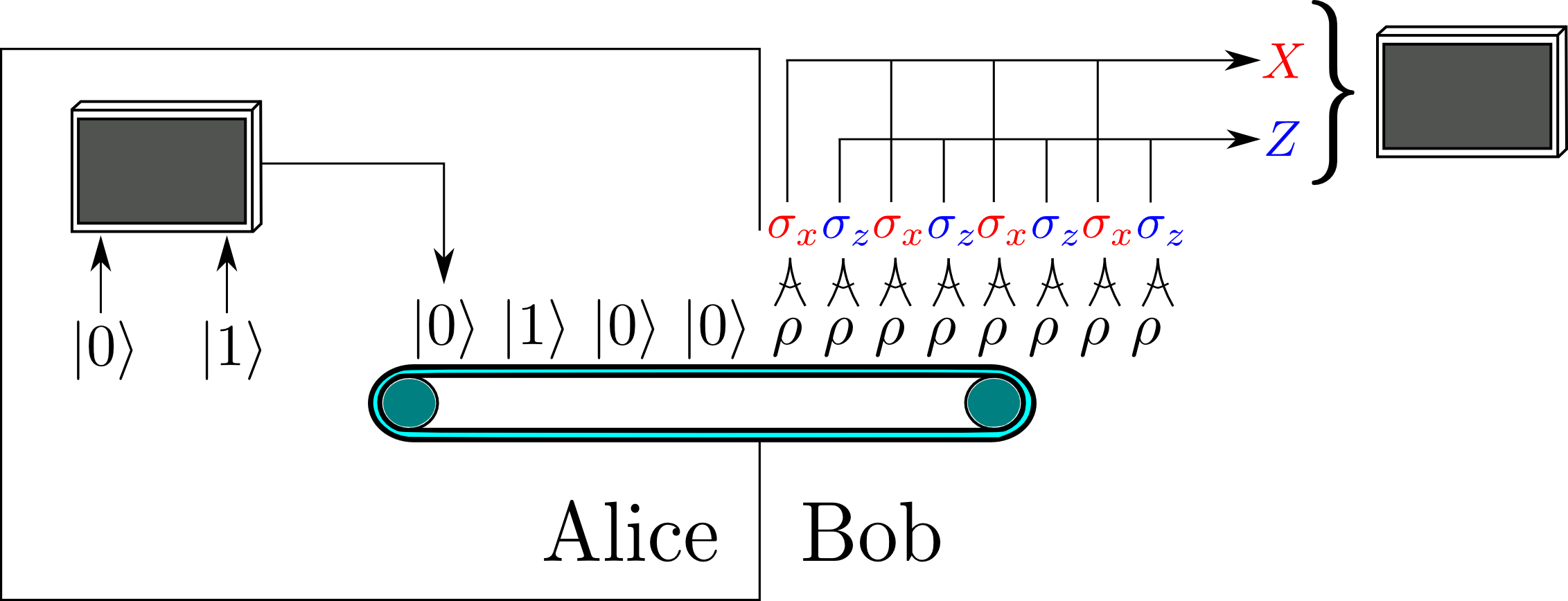}
\caption{Schematic description of the protocol given in~\cite{Bendersky2016} allowing a player Bob to tell if he is being given pseudrandom eigenstates of the $\sigma_x$ basis or of the $\sigma_z$ basis. \label{fig:alternation}}
\end{figure}

	\section{Generalized distinguishing protocol}
%!TEX root = articulo.tex

    In this Section we extend the results from~\cite{Bendersky2016}. We will consider a scenario in which there are two boxes providing qudits to an observer named Bob. One of the boxes prepares single qudit maximally mixed states (for instance, by preparing the maximally entangled bipartite state $\frac{1}{\sqrt{d}}\sum_i\ket{ii}$ and keeping one half while providing the observer with the other). The other box contains a computer producing, at each round $j$, $2d$ rational numbers \footnote{We work with rational numbers for simplicity, but any other computable ordered field (i.e. an ordered field whose field operations $(+,\cdot)$ and order relation $\leq$ are computable) would do.} $\{(r_k^{(j)},\phi_k^{(j)})\}_{k\leq d}$ with $\sum_k r_k^2=1$ and preparing a qudit in the state $\ket{\psi_j} := \sum_k r_k^{(j)} e^{i\phi_k^{(j)}}\ket{i}$. Bob, without any knowledge about which box is which, will face the problem of determining the one preparing the maximally mixed state. Our main result is a protocol for Bob to win this game with arbitrarily high probability and independently of the program being run by the computer.

%    If there is a way for Bob to succeed independently of which program the computer in one of the boxes has, he will be proving that any such computable preparation can be effectively distinguished from the maximally mixed state. This fact, that there is an effective way of distinguishing a pseudo-mixed state from a mixed state is our main result.\santi{repetido}

    Before going to Bob's protocol, let us first note that if we fix a basis $\mathcal{B}$ and only allow the computer to pick eigenstates from such basis, a slight modification of the protocol from~\cite{Bendersky2016} allows Bob to distinguish between the boxes. Namely, if instead of alternating between measuring $\sigma_x$ and measuring $\sigma_z$ as in~\cite{Bendersky2016}, Bob measures the outputs of both boxes in the $\mathcal{B}$ basis, the $d$-ary sequence associated with the box which has the computer will be computable and the other, according to quantum mechanics, independent tosses of a fair coin and so Martin-L\"of random. Hence, our previous result applies. The situation we want to consider in this work is when there is no fixed preparation basis.
	
	Bob's protocol works as follows. In each round, he will perform an \emph{informationally complete} POVM $\{E_i\}_{i\leq N_d}$ (i.e. one for which any set of outcome probabilities specifies a unique state) to the qudits coming out of each of the boxes satisfying
\begin{align}\label{eqn:unbiasedness}
  \Tr(E_i\frac{\mathbb{I}}{d})=\frac{1}{N_d}\mbox{ for all }E_i.
\end{align}

It is easy to see that such POVMs exist in every dimension $d$ (see Appendix 1). This will give rise to
two $N_d$-ary sequences $B_1$ and $B_2$ formed by the results of the measurements over the qudits coming from boxes $1$ and $2$. Note at this point that although Bob measures finitely many times, the sequences are infinite in the sense that he can keep requesting qudits from both boxes and making as many measurements as he needs. 	As we will see now, sequences $B_1$ and $B_2$ have a distinctive feature that will allow Bob to distinguish which is the maximally mixed state and which is the one being produced by a computer.
	
	Let $r\in \left\lbrace 1,2 \right\rbrace$ be the box preparing the maximally mixed state and $c=3-r$ be the box with the computer inside. It follows from \eqref{eqn:unbiasedness} that, with probability $1$, the sequence $B_r$ will be Martin-L\"of random.
%
%For concreteness, one can get one such POVM $\{E_i\}_{i\leq N_d}$ as follows:
%
%
%
%In each round he will randomly pick between $\sigma_x$, $\sigma_y$ and $\sigma_z$, using for instance a QRNG, and measure such observable to the qubit coming out from each box. This gives rise to three sequences:
%	\begin{enumerate}
%		\item A ternary sequence $M$ formed by the measurement choices performed in each round. Formally, $M(i)=1$ if
%		Bob measures $\sigma_x$ at the $i$-th round (resp. $2$ for $\sigma_y$ and $3$ for $\sigma_z$).
%		\item A binary sequence $B_1$ formed by the results of the measurements over the qubits coming from box number $1$. Formally,
%		$B_1(i)=0$ (resp. $B_1(i)=1$) if the result of measuring the observable represented by $M(i)$ to the qubit coming out of box $1$ at round $i$ was $0$ (resp. $1$).
%		\item Analogously, a binary sequence $B_2$ formed by the results of the measurements over the qubits coming from box number $2$.
%	\end{enumerate}
		On the other hand, with probability $1$, sequence $B_c$ will not be Martin-L\"of random. This is not straightforward, and we prove it next.

First, notice that from the fact that the POVM $\{E_i\}_{i\leq N_d}$ satisfies \eqref{eqn:unbiasedness} and it is informationally complete, it follows that
	
	\begin{observation}\label{obs:biasmeasurements}
		Let $\ket{\psi_j}$ be the pure state produced by box $c$ at round $j$. There is, at least, one $E_i$ such that
$\Tr(E_i\ket{\psi_j}\bra{\psi_j})>1/N_d$.
	\end{observation}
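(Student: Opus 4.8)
The plan is to argue by contradiction, leveraging the informational completeness of the POVM together with the unbiasedness condition \eqref{eqn:unbiasedness}. Suppose for the sake of contradiction that for the pure state $\ket{\psi_j}$ we had $\Tr(E_i\ket{\psi_j}\bra{\psi_j})\leq 1/N_d$ for \emph{every} $i\leq N_d$. I would then sum these inequalities over all outcomes $i$. The key observation is that a POVM must resolve the identity, $\sum_i E_i=\mathbb{I}$, so summing the left-hand side gives
\begin{align}
\sum_{i\leq N_d}\Tr(E_i\ket{\psi_j}\bra{\psi_j})=\Tr\Big(\big(\textstyle\sum_i E_i\big)\ket{\psi_j}\bra{\psi_j}\Big)=\Tr(\ket{\psi_j}\bra{\psi_j})=1,
\end{align}
since $\ket{\psi_j}$ is normalized. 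Meanwhile, summing the right-hand side of the assumed inequalities yields $\sum_{i\leq N_d}1/N_d=1$. Thus the sum of the probabilities equals $1$ on both sides, which forces every individual inequality to be saturated, i.e. $\Tr(E_i\ket{\psi_j}\bra{\psi_j})=1/N_d$ for all $i$.

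The next step is to rule out this equality case using informational completeness. By the unbiasedness condition \eqref{eqn:unbiasedness}, the maximally mixed state $\mathbb{I}/d$ already produces exactly the outcome distribution $\Tr(E_i\,\mathbb{I}/d)=1/N_d$ for every $i$. So if $\ket{\psi_j}\bra{\psi_j}$ also produced the uniform distribution $\{1/N_d\}_i$, then $\ket{\psi_j}\bra{\psi_j}$ and $\mathbb{I}/d$ would be two distinct states (a rank-one projector versus the maximally mixed state, which differ whenever $d>1$) yielding identical outcome statistics. This contradicts informational completeness, which by definition requires that the outcome probabilities specify a \emph{unique} state. Hence the assumed inequality cannot hold for all $i$, and there must exist at least one $E_i$ with $\Tr(E_i\ket{\psi_j}\bra{\psi_j})>1/N_d$, as claimed.

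I expect the main subtlety to be the handling of the equality case rather than the strict-inequality bookkeeping: the summation argument alone only delivers that the probabilities cannot \emph{all} be strictly below $1/N_d$, leaving open the boundary case where they are all exactly equal. It is precisely here that informational completeness does the essential work, by excluding the possibility that a pure state mimics the maximally mixed state's statistics. A minor point worth stating cleanly is that $\ket{\psi_j}\bra{\psi_j}\neq\mathbb{I}/d$ for any pure state in dimension $d>1$, since the former has rank $1$ and the latter rank $d$; in the trivial case $d=1$ there is nothing to distinguish and the observation is vacuous.
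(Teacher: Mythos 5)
Your proof is correct and is precisely the argument the paper intends: the paper states the observation without an explicit proof, asserting that it ``follows'' from the unbiasedness condition \eqref{eqn:unbiasedness} together with informational completeness, and your contradiction argument --- summing $\sum_i E_i=\mathbb{I}$ against the pure state to force saturation, then invoking informational completeness to rule out $\ket{\psi_j}\bra{\psi_j}$ reproducing the statistics of $\mathbb{I}/d$ --- is exactly the standard way to fill in that step. Your closing remarks correctly identify the only subtlety (the all-equal boundary case) and the implicit assumption $d>1$, which the paper's setting takes for granted.
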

	
    This, together with the following lemma, will allow us to show that any computable preparation made by Alice is distinguishable from the correctly prepared maximally mixed state.

    \begin{lemma}\label{lemma:seqcompnotrandom}
    	With probability $1$, sequence $B_c$ is not ML-random.
    \end{lemma}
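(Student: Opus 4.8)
The plan is to construct an explicit ML-test that captures $B_c$, using Observation~\ref{obs:biasmeasurements} as the source of a detectable bias. The key idea is that, although the computer is free to change its pure state $\ket{\psi_j}$ at every round, at each round \emph{some} outcome $E_i$ is over-represented (has probability strictly above $1/N_d$). If the computer were forced to use a single fixed state, this would immediately violate the law of large numbers~\eqref{lawlargenumb} and hence contradict ML-randomness. The difficulty is that the biased outcome can move around from round to round, so a naive frequency count over the whole sequence need not reveal any single deviation. First I would make the bias uniform: since the computer is a fixed Turing machine producing rationals $(r_k^{(j)},\phi_k^{(j)})$, the entire sequence of states is computable, and therefore the function $j\mapsto i_j$, where $i_j$ is (say) the least index achieving $\Tr(E_{i_j}\ket{\psi_j}\bra{\psi_j})>1/N_d$, is a computable function. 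Moreover, because there are only finitely many outcomes $N_d$, by pigeonhole at least one fixed outcome value $i^\star$ is selected as the ``predicted'' outcome on an infinite computable set of rounds $S=\{j : i_j=i^\star\}$.

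The central step is then to invoke Church stochasticity, which the excerpt grants us for ML-random sequences. I would argue by contradiction: suppose $B_c$ were ML-random. Consider the computable selection rule that reads off the subsequence of $B_c$ at exactly the positions $j\in S$ (this is an admissible Church place-selection because membership $j\in S$ depends only on the computable data of rounds $\le j$, not on the measurement outcomes). Along this selected subsequence the outcome is, at every position, the result of measuring a state whose probability of yielding the value $i^\star$ is strictly greater than $1/N_d$. Hence the relative frequency of the symbol $i^\star$ along the selected subsequence must, in the limit, be bounded below by some $p>1/N_d$ with probability~$1$ (by the strong law of large numbers for the independent—though not identically distributed—trials, it suffices that the per-round excess probabilities do not vanish, which I address below). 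This contradicts the conclusion of Church stochasticity, namely that every computably selected subsequence of a Church-stochastic sequence satisfies~\eqref{lawlargenumb} and thus has limiting frequency $1/N_d$ for symbol $i^\star$. Therefore $B_c$ is not Church-stochastic, and in particular not ML-random.

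The main obstacle is the quantitative control of the bias. Observation~\ref{obs:biasmeasurements} only guarantees a strict inequality $>1/N_d$ at each round, which could in principle approach $1/N_d$ as $j\to\infty$, so the excess might not be uniformly bounded away from zero and the selected subsequence's limiting frequency could still converge to $1/N_d$. To close this gap I would refine the construction: for each fixed rational threshold $\delta>0$ and each outcome value $i$, consider the computable set $S_{i,\delta}=\{j : \Tr(E_i\ket{\psi_j}\bra{\psi_j})\ge 1/N_d+\delta\}$. Observation~\ref{obs:biasmeasurements} guarantees that for every $j$ there is some pair $(i,\delta)$ with $j\in S_{i,\delta}$; a counting/compactness argument over the finitely many $i$ and a suitable countable family of $\delta$'s shows that at least one $S_{i^\star,\delta^\star}$ is infinite. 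Restricting the place selection to $S_{i^\star,\delta^\star}$ then yields a subsequence with a uniform per-round excess $\delta^\star>0$, so the strong law forces limiting frequency $\ge 1/N_d+\delta^\star$ almost surely, contradicting Church stochasticity and completing the proof that $B_c$ is not ML-random with probability~$1$.
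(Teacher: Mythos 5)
Your proposal follows the same route as the paper's proof: use Observation~\ref{obs:biasmeasurements} together with the computability of Alice's program to define a computable place selection picking out rounds where one fixed effect $E_{i^\star}$ is over-represented (the paper's function $h$), and then conclude that the selected subsequence violates the law of large numbers \eqref{lawlargenumb}, so that $B_c$ is not Church stochastic and hence not ML-random. You actually go beyond the paper in one respect: your worry that the per-round excess $\Tr(E_{i^\star}\ket{\psi_j}\bra{\psi_j})-1/N_d$ could tend to $0$ along the selected rounds is legitimate, and the paper's proof silently ignores it (it asserts the failure of \eqref{lawlargenumb} ``by definition of $h$'' without any uniform lower bound on the bias).

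However, the step you use to close this gap does not work as stated. From the fact that every round $j$ lies in some $S_{i,\delta}$, with $i$ ranging over the finitely many outcomes and $\delta$ over a countable family of thresholds, it does \emph{not} follow that some $S_{i^\star,\delta^\star}$ is infinite: a countable union of finite sets can cover all rounds (for instance, if the maximal bias at round $j$ decayed like $2^{-j}$, every single $S_{i,\delta}$ would be finite). The correct repair applies compactness in a different place: the set of pure states of $\mathbb{C}^d$ is compact, the map $\psi\mapsto\max_i\Tr(E_i\ket{\psi}\bra{\psi})$ is continuous, and by the argument behind Observation~\ref{obs:biasmeasurements} (informational completeness plus \eqref{eqn:unbiasedness}) it is strictly greater than $1/N_d$ at \emph{every} pure state; hence its minimum over the compact state space is some $1/N_d+\delta_0$ with $\delta_0>0$, uniformly in the state. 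With this uniform $\delta_0$, pigeonhole over the $N_d$ outcomes yields an infinite (and computably selectable) set $S_{i^\star,\delta_0}$, and the rest of your argument --- the strong law of large numbers for independent, non-identically distributed trials, plus Church stochasticity of ML-random sequences --- goes through. So the architecture of your proof is right, and matches the paper's, but its key quantitative step must be justified by compactness of the state space rather than by a counting argument over thresholds.
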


    %To prove Lemma \ref{lemma:seqcompnotrandom} we describe a Martin-L\"of test with oracle $M$ capturing $B_c$. For
    %every algorithm being used by Alice there will be a different test. However, because of the existence of
    %a universal oracle Martin-L\"of test, Bob will not need to know it in order to perform the protocol.
    \begin{proof}
	Following Observation \ref{obs:biasmeasurements}, without loss of generality, we assume that $E_k$ is such that
\begin{align}\label{assumption}
  \Tr(E_k\ket{\psi_n}\bra{\psi_n})> 1/N_d\mbox{ for infinitely many }n.
\end{align}

This means that there is an algorithmic way to identify a subsequence of $B_c$ not satisfying the law of large numbers (with probability $1$). Namely, let $h:\NN\to\NN$ be defined as
\begin{align*}
h(0)&:=0\\
h(n+1)&:=\min_{m} \left[[\Tr(E_k\ket{\psi_m}\bra{\psi_m})>\frac{1}{N_d}]\land [m>h(n)]\right]
\end{align*}
By assumption \eqref{assumption}, $h(n)$ is defined for all $n$. Next, by definition of $h$, with probability $1$ the sequence
	$$Y=B_c(h(0))B_c(h(1))B_c(h(2))\dots\in\{1,\dots,N_d\}^\omega,$$
	which is a subsequence of $B_c$, does not satisfy the law of large numbers \eqref{lawlargenumb}. Hence, noting that $\ket{\psi_m}$ is computable from $m$ (e.g. with Alice's program) and so $h$ is a computable function, we have that, with probability $1$, $B_c$ is not Church stochastic and so it is also not ML-random.
	\end{proof}
    %For every program $P$ for Alice and every measurment sequence $M$ we will have a different Martin-L\"of test. So, finally, to remove the dependence on Alice's program, Bob will use a universal Martin-L\"of random with oracle $M$.

%    at least one third of the rounds the state $\ket{\psi_i}$ prepared by Alice gives a biased result when measuring in the direction defined by $M_i$. Since we know
 %   \santiw{`we know'es confuso. Usar el programa y listo. Si no, confundimos demostracion matematica con epistemologia}
 %   Alice's program, we can tell if any given measurement will give a biased result and bet on it. Therefore, $B^{(c)}$ is not algorithmically random with respect to sequence $M$. \commentAriel{Ariel: Mejorar mucho}

    We have proven that $B_c$ is not ML-random but $B_r$ is. Now the argument carries on as in \cite{Bendersky2016}. Namely, given a desired probability of error $\epsilon$, Bob computes $m:=\min_k [2^{-k}\leq\epsilon]$ and starts enumerating all the strings in $U_m=\{s_1,s_2,\dots\}$ until he finds some $n$ such that
$
[B_i\upharpoonright n]\subseteq \bigcup_{i\leq n}[s_i]
$
for some $i\in\{1,2\}$ and claims that box $i$ is the one with the computer.
Since, with probability 1, either $B_1$ or $B_2$ is not Martin-L\"of random, the last condition has to be satisfied for sufficiently large $n$ with probability 1. His claim is incorrect when the sequence ML-random was captured by $[U_m]$ which happens with probability $2^{-m}\leq\epsilon$.

    \section{Experimental test}
%
%%%
%%%
%
%
%
%
%
%
%\commentRichi{Habr\'a que escribir todo o parte del protocolo, APPENDIX A.2 y A.3 (su versi\'on tolerante al ruido)} \commentAriel{Puse la referencia a nuestro paper del arxiv, que tiene la versi\'on tolerante al ruido. No est\'a publicado as\'\i, as\'\i que si les parece razonable podemos hacer una breve descripci\'on en un ap\'endice. Pero no estoy seguro de que sea necesario.}
%
%\commentRichi{Me parece que tiene que haber un ap\'endice con la secci\'on 2-b (\emph{Distinguishing a fair coin from a computer - The protocol}) del trabajo del arxiv \cite{bendersky2014implications}, aunque sea resumida, en un ap\'endice. Sino, la aparici\'on de par\'ametros como $k$, $\ell_{max}$ es muy confusa y no se entiende para qu\'e est\'an}
%

In this section we present a proof-of-concept realisation of the distinguishing protocol presented in \citep{Bendersky2016} and resumed above. In the next lines we describe the additions/modifications made to the theoretical scenario, arising from experimental considerations.

First, to account for experimental imperfections, we will work under the assumption of a noise model consisting of a flip probability $f$ in the observed symbols. That is, we consider the situation in which those results obtained when measuring the qubit states in the actual basis used by Alice are correct with probability $1-f$ (this simple noise has no effect on the results of measurements performed in the wrong basis).
For the sake of concreteness, we describe next an explicit algorithm for Bob to distinguish which of sequences of measurement outputs $X$ and $Z$ is the one corresponding to measuring in the preparation basis (see Fig. \ref{fig:alternation}). This algorithm, although less resistant to noise than the general protocol using ML-tests given in \cite{Bendersky2016}, is robust enough for the noise model we are considering.

Bob will dovetail between program number and the maximum time steps required for the simulation of this program on a (fixed) universal Turing machine $\VV$ (that is, he will simulate program 1 for 1 time step, then programs 1 and 2 for 2 time steps and so on). This is a common technique in computability theory to avoid non-halting programs (see e.g. \cite{davis1994computability}). For each program $p$ of length $|p|$ he will compute the Hamming distance (i.e. the number of different bits) between its output at time $t$ and the first $k|p|$ bits of the sequences $X$ and $Z$ (notated $X\uph k|p|$ and $Z\uph k|p|$ respectively). The parameter $k\in\NN$ will depend on the probability of success we are looking for. Whenever he finds a match for the first $k|p|$ bits, he halts and claims that the corresponding sequence is the computable one. %Fig. \ref{figDT} depicts the dovetailing algorithm.
Letting $q \in \mathbb{Q}$ be the fraction of bit flips in the prefixes, the pseudocode is Algorithm \ref{alg:distinguishNoise} below, where $d_H$ denotes Hamming distance.

\begin{algorithm}[H]\caption{The noise tolerant distinguishing protocol}\label{alg:distinguishNoise}
\label{alg:the_alg}
\begin{algorithmic}
\Require $q\in \mathbb{Q}$, $k \in \NN$ and $X,Z\in\cantor$, one of them being computable
\Ensure `$X$' or `$Z$' as the candidate for being computable; wrong answer with probability bounded by $O(2^{-k})$
\For{$t=0,1,2\dots$}
    \For{$p=0,\dots,t$}
     \If {$d_H(\VV_t(p),X\uph k|p|)<qk|p|$} \State output `$X$' and halt \EndIf
     \If {$d_H(\VV_t(p),Z\uph k|p|)<qk|p|$} \State output `$Z$' and halt \EndIf
    \EndFor
\EndFor
\end{algorithmic}
\end{algorithm}

In the appendix we show that the probability of error, i.e. the probability of Bob making a wrong claim about which of the two sequences $X$ and $Z$ is a subsequence (with its bits flipped with probability $q$) of Alice's sequence, is
\begin{equation}
 P_{{\rm err}}<\frac{2^{1+qk-k}\left(\frac{e}{q}\right)^{qk}}{1-2^{1+qk-k}\left(\frac{e}{q}\right)^{qk}}.
\end{equation}
and, it can be shown numerically that for $q\lesssim 0.21$ it goes to zero exponentially with $k$. This distinguishing protocol appeared in a first version \cite{bendersky2014implications} of the results published in \citep{Bendersky2016}.

Notice that Algorithm \ref{alg:distinguishNoise} --as it was the case with the protocol using a universal ML-test-- is independent of Alice's algorithm. This independence, however, comes at the expense of unfeasibility, because it is achieved through a search over the whole space of all Turing machines. Hence, the second implentation decision we make is to restrict the possible algorithms used by Alice to the $rand()$ function of Matlab using the Mersenne Twister default generator algorithm \cite{matsumoto1998mersenne}\} with initial seeds of a fixed maximum length $\ell_{max}$. In spite of being a simplified scenario, this still represents a quite usual experimental situation.
%
% The idea is that Alice sends a fixed-length pseudo-random binary string from a set in the (randomly chosen) basis of eigenstates of either $\sigma_x$ or $\sigma_x$. Bob measures odd and even bits in these conjugate bases and by comparing with the set of pseudo-random strings he can distinguish which encoding basis was used by Alice.
%
Finally, some minor changes to Algorithm \ref{alg:distinguishNoise} were required due to the non-deterministic nature of the emission and detection of \emph{Poissonian} single photon states used as physical implementation for qubits. The adapted protocol can be specifically stated as follows:
\begin{itemize}

\item Alice and Bob set the value of two parameters from the protocol: $\ell_{max}$ which determines the maximum length of the \textit{rand()} function seed to be used and $k$ which bounds to N = $k \times \ell_{max}$, the number of qubits to be transmitted on any run of the experiment.

\item Alice pseudo-randomly chooses one integer between $0$ and $2^{\ell_{max}}\text{-}1$ which is used as the initial value, or seed for the \emph{rand}() function. The output of  \emph{rand}() is binarized using the \emph{round}() function resulting on a string of $N$ \textit{pseudo-random} bits.

%to control the qubit state preparation.

\item Alice chooses randomly (with fair coin randomness as explained below) the basis in which she will encode and send the string.

\item Alice sends the $N$ qubits to Bob. She encodes the binary string information in the photon polarization degree of freedom of a faint pulsed light beam.

\item Bob measures the  $\frac{N}{2}$ even and  $\frac{N}{2}$ odd elements, each in one of the mutual unbiased bases.

%\item Bob, after measurement, searches within the restricted function set of strings until the Hamming distance (of even or odd bits) between experimental data and rand() outputs fulfills a minimum distance condition.

\item Bob, after measurement, computes the Hamming distance (for even and odd bits) between experimental data and the output of \textit{rand}() function with the different seeds. When the minimum Hamming distance condition is fulfilled Bob ends the search.

\item Finally Bob compares the state preparation ($\sigma_{x}$ or $\sigma_{z}$ mixtures) predicted by him with the mixture that was actually prepared by Alice to estimate the error probability ($P_{{\rm err}}$) of the prediction.

\end{itemize}

%Qubits are prepared, sent and measured in the photon polarization quantum degree of freedom of 810 nm faint pulsed light beam.

A complete experiment consists in several repetitions of the protocol sketched above. Every execution is divided in two parts;  the \emph{transmission} of qubits from Alice to Bob, followed by a \emph{search} routine, where Bob compares both bit strings with the strings generated by the \emph{rand}() function over all seeds of length bounded by $\ell_{max}$ as it is stated in the theoretical protocol. When Bob finds a string that resembles the experimental series up to a certain $d_H$ value, %($\lfloor q \times \ell \times k \rfloor$), where $q$ is the bit flip tolerance
the search ends. The result is compared with the actual basis used by Alice and the wrong guesses are registered as errors. After this they repeat the procedure  with a new seed pseudo-randomly picked, and a new random emission basis choice. The bound for $d_H$ allows us to control the tolerance of the experiment against the Quantum Bit Error Rate (QBER).

One thing to be noticed is that Bob may not find a series that fulfills the desired Hamming distance condition. % That is because the maximum number of bits compared on each repetition is fixed.
 This is a situation that is not present in the theoretical protocol. In this way every time that Bob doesn't find a match we compute the experiment as inconclusive and it is discarded. To overcome this issue, the parameters of the protocol (such as maximum $d_H$ allowed)  were set to guarantee that the probability of error occurrence was always greater than the probability of not finding any bit string fulfilling the condition. Under such assumptions, and using reasonable tolerances, we find that the ratio of inconclusive experiments to total number of errors was negligible.
%total number of experiments is negligible, as detailed below.

The experiment involved 3100 repetitions of the \textit{transmission} and \textit{search} protocols. The total number of qubits transmitted on each repetition was fixed, and set by $k_{max} \times \ell_{max}$ (in this implementation $\ell_{max}=10$). The parameter $k$ determines the theoretical error probability for a given tolerance ($q$) and was set to take values between $1$ and $16$. This bounds the maximum number of compared bits on each Hamming distance calculation to $N=320$ ($\ell_{max}\times k_{max}$ bits for even and odd bits); that is the number of qubits that Alice sends to Bob on each run.

After the qubit transmission is finished, Bob begins the search procedure building a list of programs with the restricted family of seeds in the following way:

\begin{itemize}
\item the seeds $0$ and $1$ are assigned to the $1$-bit programs $0$ and $1$ respectively.
\item the seeds $0, 1, 2$ and $3$ are assigned to the 2-bit programs $00, 01, 10$ and $11$ respectively.
\item (...)
\item the seeds $0,1,\dots,2^{\ell_{max}}$ are assigned to the $\ell$-bit programs $000\dots 0,\dots,111\dots 1$ respectively.
\end{itemize}

%The resulting list has $2^{\ell_{max}+1}$ elements and fulfils the requirement of the noise tolerant protocol of \commentNacho{...}
%\commentNacho{El protocolo pedía recorrer infinitas veces cada programa? aca la mitad de los programas se recorre una sola vez pero en las condiciones del experimento sirve para terminar siempre de forma concluyente}
Note that the resulting list has $2^{\ell_{max}+1}-2$ elements. Some programs appear repeated (e.g. the rows with bold letter in the table~\ref{table:tb_0} correspond to the seed 0 that appears 10 times in a list with $\ell_{max}=10$), with different associated lengths. This particular condition is required by the noise tolerant version of the protocol, as it depends on a known fact on computable sequences: every computable sequence can be generated by infinitely many different programs.

\begin{table}
%\centering
\begin{tabular}{|c|c|c|c|}
  \hline
  %\rowcolor{LightCyan}
  \textbf{program \#} & \textbf{seed (bits) }& \textbf{pr. length} & \textbf{bit string}\\
  \hline
  %\rowcolor{Gray}
  \textbf{0} & \textbf{0} & \textbf{1} & \textbf{00111011...}\\
  \hline
  1 & 1 & 1 & 10000100...\\
  \hline
  %\rowcolor{Gray}
  \textbf{2} & \textbf{00} & \textbf{2} &\textbf{00111011...}\\
  \hline
  3 & 01 & 2 &10000100...\\
  \hline
  4 & 10 & 2 &10000100...\\
  \hline
  5 & 11 & 2 &01101000...\\
  \hline
  %\rowcolor{Gray}
  \textbf{6} & \textbf{000} & \textbf{3} &\textbf{00111011...}\\
  \hline
  (...) & (...) & (...)&(...)\\
  \hline
  $2^{\ell_{max}+1}-1$ & 1111111111... &$\ell_{max}$ & 10001111... \\
  \hline
\end{tabular}
\caption{\footnotesize{\emph{Search list used by Bob to compare the experimental data with the pseudorandom strings generated by the \emph{rand}}() \emph{function with different seeds}}\label{table:tb_0}}
\end{table}

%\commentRichi{ac\'a hay que pensar un poco eso de poner que cada tira finita computable de bits es generada por infinitos programas, o c\'omo se vincula esto con la parte te\'orica.}. \commentAriel{Ahí lo mencioné, no me parece muy importante explicar demasiado acá. Si agregamos un apéndice detallado, sí}\\
%\commentNacho{Intento poner algo que cierre la busqueda del protocolo:}
Bob compares the first measured bit string with each row on the list, and stops the search when either the even or odd bits of the compared strings fulfil the Hamming Distance criterion. Finally he compares the basis for the mixed state preparation predicted by this protocol with the one that Alice actually used, for the error probability estimation.
%continues with the search procedure by analyzing the next measured bit string.

\subsection{Experimental setup}

The above protocol was tested on a photonic setup, based on a modified BB84 Quantum Key Distribution (QKD) implementation \cite{lopezgrande2016autonomous} which consists of an emission stage that is able to send binary states coded in two different unbiased bases of the photon polarization, which are called computational basis and diagonal basis, and a reception stage for the quantum channel. Additionally, a classical communication channel is added for synchronization, transmission and data validation.

The four polarization qubits are obtained using attenuated coherent pulses generated with four infrared LEDs, controlled by a fast pulsed driver (optical pulse duration $25ns$ FWHM). Faint coherent pulses can be used as probabilistic single photon sources: on each pulse the photon number distribution is Poissonian. Unlike the theoretical protocol, where each qubit is sent and received deterministically, here the transmission of a qubit is probabilistic. As opposed to QKD, in this demonstration the fact that most of the emitted pulses have zero photons requires Alice to send each state several times until Bob makes a successful detection.

Polarization states are obtained by combining all the outputs from the LEDs in a single optical path using polarization beam splitters (PBS), a half waveplate retarder and a beamsplitter (BS).
%The LED outputs are coupled into multimode fibers and decoupled and collimated using aspheric lens packages. We use the four inputs of a polarization beam splitter (PBS) to select the polarizations and combine the paths (tagged H, V, D, A) in pairs of orthogonal polarizations (figure \ref{fig:0}). Initially the light emitted by the LEDs is unpolarized. Due to the action of the PBS, the polarization state of the transmitted beams (H and D) is horizontal, while the reflected ones (V and A) exit with vertical polarization. Then the D and A beams pass through a half-waveplate retarder, rotating their polarizations $45^o$ (turning horizontal and vertical polarizations to orthogonal diagonal polarizations), and finally a beam splitter cube combines the two pairs of beams into a common, single output path.
A bandpass filter centered at $810$ nm narrows the photons bandwidth down to $10$ nm FWHM. A TTL clock pulse is sent to Bob every time a pulse is emitted in order to synchronize the optical pulses with the gated detection scheme.

\begin{figure}[h!]
    \centering
    \includegraphics[width=0.45\textwidth]{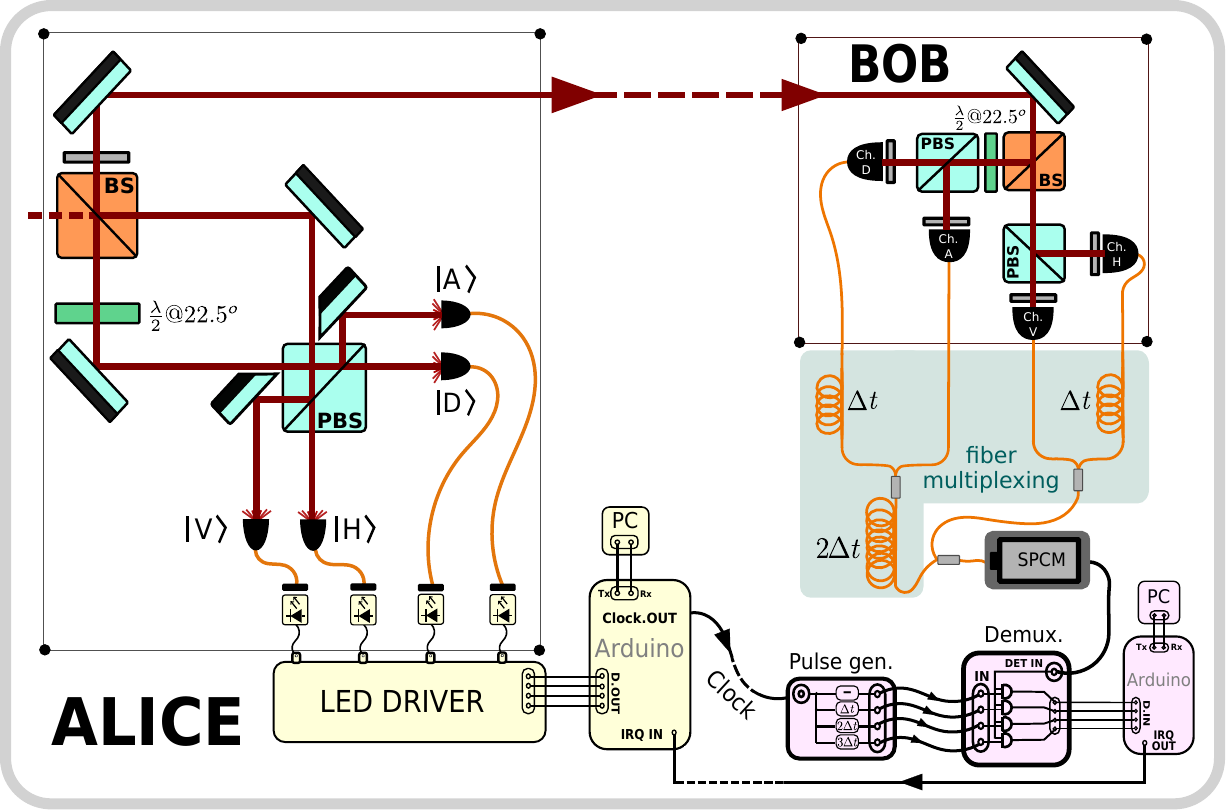}
  \caption{\footnotesize{\emph{Complete setup for implementing the transmission and search protocol: Qubits encoded in polarized faint pulses are produced by infrared LEDs. Light is coupled into and de-coupled from multimode fibers to obtain uniform beams for the four sources. The polarization state preparation is achieved by passing through a PBS (for H and V states) and an extra halfwave-plate for the D and A paths. A non-polarizing Beam Splitter cube couple the optical paths into an only exit light path. At the receiver's side a BS passively and randomly selects the detection basis for each incoming pulse. The outputs are coupled into multimode optical fibers, where different delays are imposed to make a polarization to time-bin transformation into a common output fiber. Finally a photon counter module and a temporal mask demultiplexer are used for detection.}}}
  \label{fig:0}
\end{figure}

%\subsection{Qubit detection}
At Bob's side the detection basis is passively and randomly selected with a BS. Each detection basis consists in a PBS with both outputs coupled into multimode fibers. An additional half-waveplate before one of the PBS allows for detection in the diagonal basis. We implement a polarization to time-bin transformation by adding different delays to each channel. This allows us to utilize a fiber multiplexing scheme with only one single photon detector  (figure \ref{fig:0}). Temporal masks generated using the clock pulse emitted by Alice act as demultiplexer and detection gating.

Programmable Arduino Mega 2560 boards are used to carry out the synchronization, communication and data processing tasks, for which specific interfacing peripherals were developed. A desktop personal computer generates the binary strings of pseudorandom bits using the Matlab function \emph{rand}(), and stores the bit strings. Finally a Quantum Random Number Generator (QRNG) based on which-path detections of single photons exiting a beam splitter is used for the realization of a random selection of the emission basis on each repetition of the experiment.

Alice sends each bit of the string repeatedly at a frequency of $170$ kHz until she receives an interruption signal, indicating that the qubit was correctly detected by Bob. Due to the probabilistic nature of the qubit transmission process each state may be sent several times before Bob makes a successful detection. In particular, given that the photon number distribution per pulse is Poissonian (with a mean photon value at the detector of $0.1$), on average one every ten pulses is detected. Furthermore, the detection base is randomly selected so $50\%$ of the detected photons are discarded by base mismatch. This results in an overall qubit transmission rate of $\frac{1}{20}$ per emitted pulse. %Finally Bob sends to the PC the bit string generated by this process, and starts with a new detection run. %Under these conditions the overall bit rate, including the communication with the PC is $220$ bps. Alice's and Bob's stages are separated by approximately $80$ cm.

%Once the $N$ bits are detected by Bob, he stores the detected bits and  finally both Alice and Bob run a check routine where they control that the number of sent and detected bits agree.
%control that all bits have been sent and detected
 %If for some reason an error occurred during the protocol (e.g. a missed \emph{IRQ} signal or a timed out event), it is detected at this step and the system repeats the complete run.

\subsection{Complete Results and Simulations}
\label{sec:simul}

Herein we analyze the experimental results. We compare the performance of Bob at guessing the emission basis, with the error probability $P_{{\rm err}}$ obtained in \cite{bendersky2014implications}, and we also present additional data analysis aiming to explain the behavior of the error rate obtained.\\

%Here we report the measurements and analysis made along the realization of this experiment. We also present the error probability $P_{err}$ estimated from the $N_{TOT}=3100$ runs of the search process over different values of $k$ and some ancillary data analysis aiming to explain the behavior of this probability.\\

As a result of each run, Bob gets two 160-bit length strings. $M_e$ are the outcomes of even qubits, measured in the computational basis and $M_o$ are the outcomes of odd qubits, measured in the diagonal basis. These two strings correspond to $Z$ and $X$ introduced in Algorithm \ref{alg:the_alg}. %Bob compares these strings with the pair of strings from the program list $S^j_{e}$ and $S^j_{o}$, where $j$ stands for the number of program evaluated.
Bob compares these strings with the pair of strings from the program list $p_{e}$ and $p_{o}$, where the $p$ stands for the number of program evaluated.
Note that when evaluating a program of length $\ell $ just the first $k \times \ell $ bits of the transmitted string are taken into account to compute $d_H$. The whole $160$ bit string is only used in the Hamming distance measure of programs with $\ell_{max}=10$.

We calculate the Hamming distance between the strings, $d_H(p_{e},M_e\uph k \, \ell )$ and  $d_H(p_{o},M_e\uph k\,\ell )$, and the search finishes when one of them fulfils the tolerance criteria: $d_H(p_{i},M_i\uph k\,\ell )\leq \lfloor q \times k \times \ell  \rfloor$ from the noise tolerant protocol. In this experiment the tolerance parameter is set to $q=0.15$. The result of the search for each run is registered for a further estimation of the error rate $R_{{\rm err}}(k,q)$.

The probability of error in Bob's guess of the emission basis can be estimated for different values of the parameter $k$. Figure \ref{fig:exp&sim} shows the error rate obtained from the experimental data and from a computational simulation of the experiment, together with the theoretical bounds for the distinguishing -- noiseless and noise tolerant -- protocols.

The error rate as a function of $k$ remains always below the noise-tolerant limit and also above the noiseless theoretical bound (excluding the scenarios with values $k=1$ and $k=2$). The error shows some unexpected increments  for $k=7$ and $k=14$. This behavior arises due to the discrete nature of the number of errors allowed on each string comparison, and it is explained below.\\

\begin{figure}[ht]
    \centering
    \includegraphics[width=8.75cm]{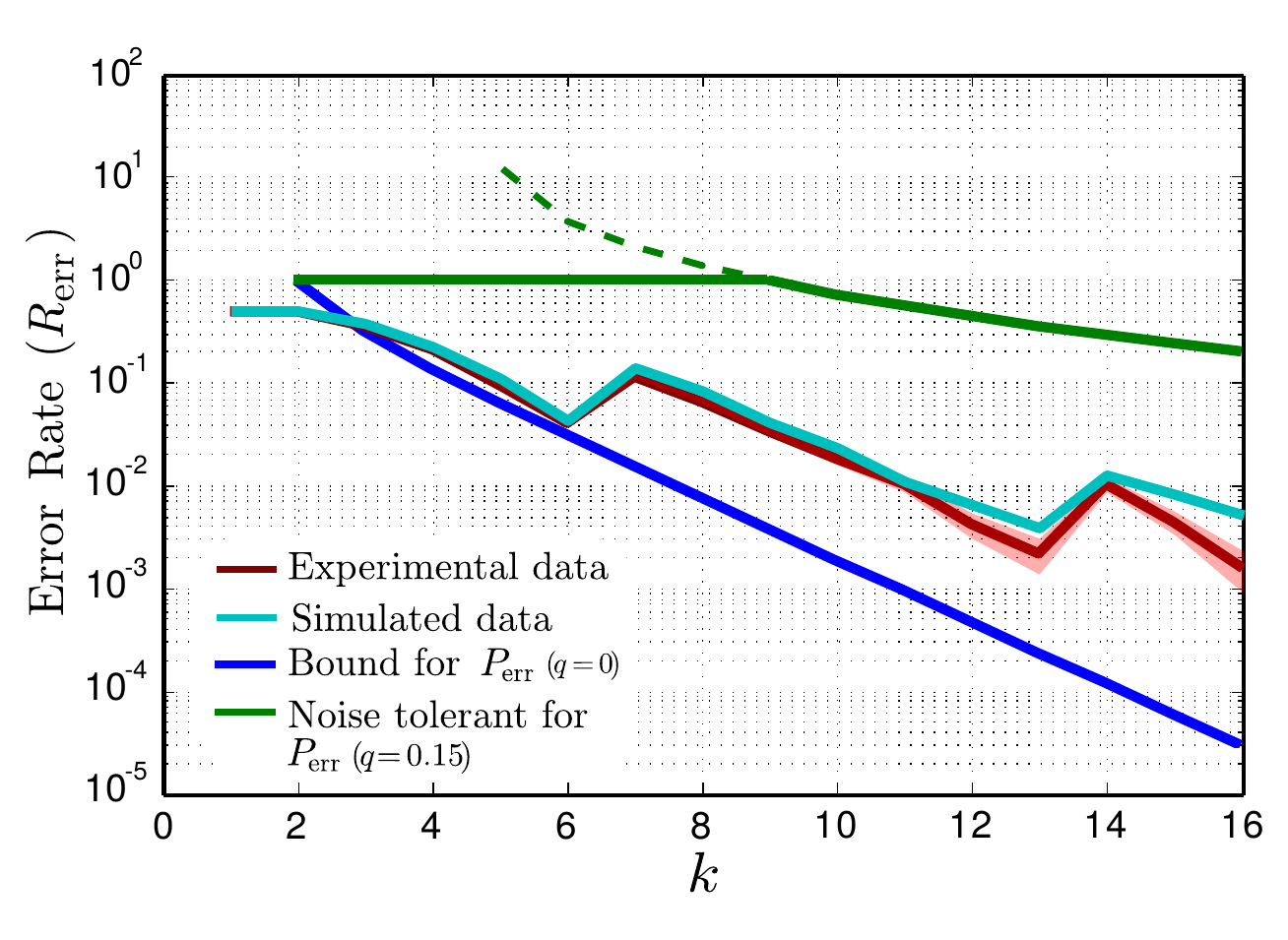}
\caption{\footnotesize{\emph{The plot shows the experimental error rate obtained with the noise tolerant protocol  (red lines), compared with the theoretical bounds for: the noiseless (blue line) and noise tolerant (green line) algorithms. The cyan line is the computational simulation of the experimental data taking into account the average $QBER$.}}}
  \label{fig:exp&sim}
\end{figure}

Figure \ref{fig:length1contrib} shows the total error rate and the contribution to this quantity arising from programs of length 1. It is evident that errors occur mostly in the minimum length programs. In particular for $k>10$ all the guessing errors come from these programs (which are the first to be evaluated in the \emph{search} procedure).
This fact simplifies the description of the error occurrence just in terms of the probability of error occurrence while evaluating length 1 programs.

%This fact allows to describe the behavior of $P_{err}(k)$ in terms of the \emph{effective tolerance} ($q_{eff}$) for the length 1 programs .
%{\color{gray}The error shows  some unexpected increments for $k=7$ and $k=14$. This behavior arises due to the discreet nature of the number of errors allowed on each string comparison, and it is explained below.\\}

\begin{figure}[h!]
    \centering
    \includegraphics[width=8.75cm]{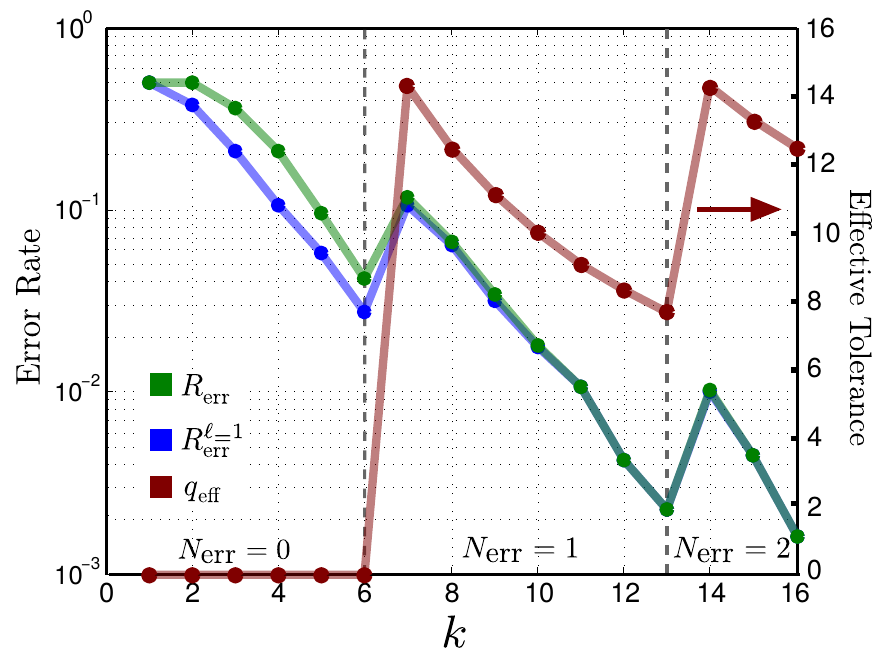}
  \caption{\footnotesize{\emph{
Total error rate (green) and the contribution to this quantity coming from the programs of length 1 (blue), together with the effective tolerance $q_{\rm{eff}}$ (dark red) for each value of $k$ over the 3100 experiment repetitions. The tolerance value $q$ was set to 0.15. Almost every error comes from $\ell=1$ programs and the probability of error occurrences coming from programs with $\ell>1$ vanishes as $k$ increases.
%The total error probability increases between k=6 and k=7 and k=13 and k=14 this is due to the rise of the effective tolerance $q_{eff}$ in these particular values.
The sudden increases of $q_{\rm{eff}}$ in $k=7$ and $k=14$ appear due to the discrete nature of the maximum number of errors allowed on a accepted bit string (maximum Hamming distance). Also for each value of $k$ where the effective tolerance probability for $\ell=1$ increases, the error probability also increases. The vertical dashed lines delimit the regions where the maximum number of bit flips $N_{{\rm err}}$ allowed (for $\ell=1$) is constant.
}}}
\label{fig:length1contrib}
\end{figure}

%{\color{gray} The figure \ref{fig:length1contrib} shows the total error probability and the contribution to this quantity from the programs of length 1. It becomes evident that the errors occur mostly in the minimum length programs. In particular for $k>10$ all the errors come from these programs. This fact allows to describe the behavior of $P_{err}(k)$ in terms of the \emph{effective tolerance} ($q_{eff}$) for the length 1 programs.}
%\begin{figure}[h!]
%    \centering
%    \includegraphics[width=8.5cm]{tol_vs_perr.pdf}
%   \caption{\footnotesize{\emph{ }}}
%  \label{fig:4}
%\end{figure}

The tolerance $q$ determines the maximum number of errors allowed: $N_{{\rm err}}=\lfloor q\times k \times \ell   \rfloor$. This quantity divided by the program length gives the \emph{effective tolerance}: \newline $q_{\rm{eff}}=\frac{\lfloor q\times k \times \ell  \rfloor}{\ell }$. As almost all the errors arises from minimum length programs, the $R_{\rm{err}}$ increments can be explained looking at $q_{\rm{eff}}$ from $\ell =1$.
As can be seen in figure \ref{fig:length1contrib}, for $k$ below $6$ the effective tolerance is $0$ ($N_{\rm{err}}=0$). That is why the error rate follows the ideal theoretical curve for these values (figure \ref{fig:exp&sim}). The increments on the error at $k=7$ and $k=14$ are correlated with increments in the $q_{\rm{eff}}$ (this will happen for every $k$ where the number of maximum bit flips allowed $N_{{\rm err}}$ is increased by 1 for $\ell=1$ programs).

\begin{figure}[h!]
    \centering
    \includegraphics[width=8cm]{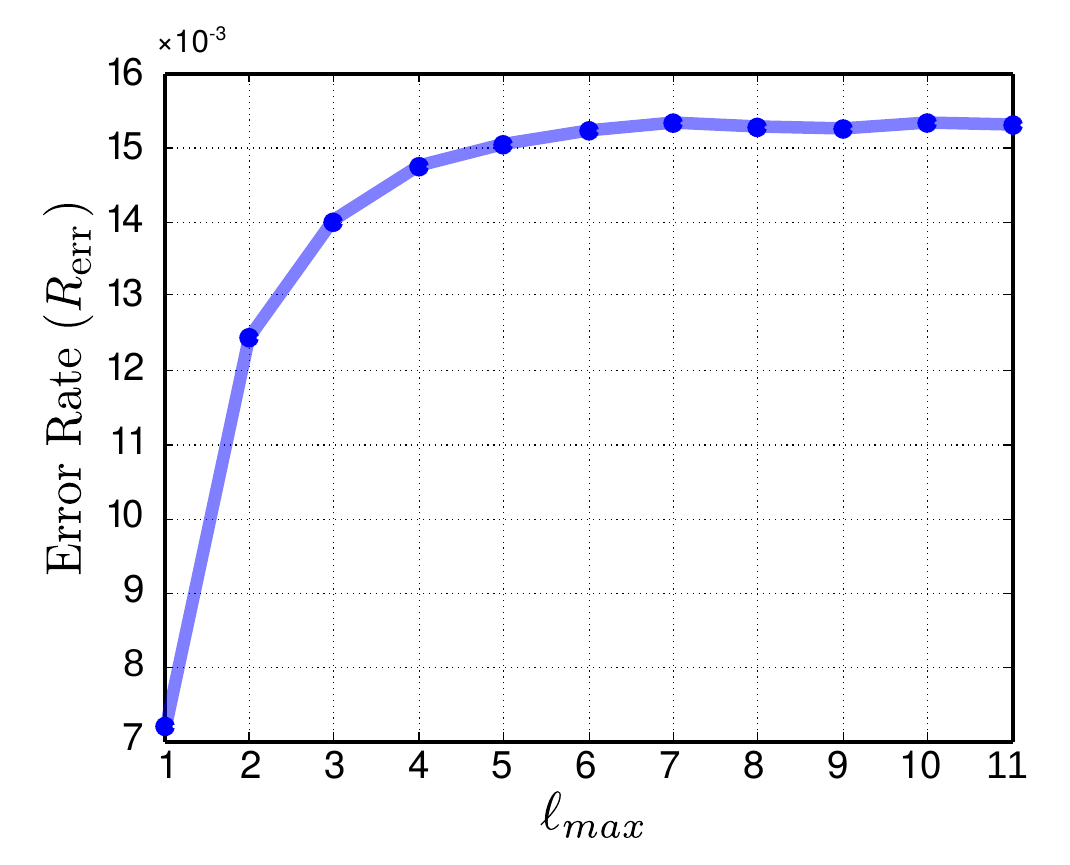}
  \caption{\footnotesize{\emph{The plotted data corresponds to simulations of the experiment with an increasing set of programs (from $2^1$ to $2^{11}$ programs). For each value of $\ell_{max}$ we run a simulation of the experiment with $5 \times 10^4$ repetitions. The $P_{{\rm err}}$ value stabilizes as the number of programs grows.
  }}}
  \label{fig:PerrLmax}
\end{figure}

Finally, as a validation of the results, we simulated the same experiment with different sizes of the set of programs used for fixed values of $k=14$ and $q=0.15$ (recall that for this experiment the program list was restricted to $2^{10}$ different elements).
Figure \ref{fig:PerrLmax} shows the simulated error rate for values of the maximum length program $\ell_{max}$ from $1$ to $11$ over $5\times10^4$ repetitions.
The error rate stabilizes as $\ell_{max}$ grows. This shows that our results are representative of the values that would be obtained if an experiment with larger $\ell_{max}$ was performed.
In this regard, a similar experiment was implemented afterwards, utilizing a larger set of seeds for the \textit{rand()} function:
The complete protocol was implemented with $\ell_{max}=16$ (65536 different seeds) over $3200$ repetitions of the experiment with $q=0.15$ and $k$ taking values from $1$ to $16$ showing the same behavior of the basis guess success rate.

    \section{Discussion}

    In this article we extended results from \cite{Bendersky2016} in two ways. First, we proved that any attempt to mix pure states into the maximally mixed state, when performed by a computer (or any system equivalent in terms of computability power), can be distinguished from the maximally mixed state prepared correctly (either the one obtained by looking at a part of a maximally entangled state or by using a truly random source). This broadens the scope of the previous results, in which only some computable mixtures were analyzed.

    Second, we presented a proof-of-concept experiment showing that mixing two different sets of pure states that are supposed to yield the same mixed state, can be distinguished when mixed employing one of the most widely used general purpose pseudorandom number generators.

These two results should be seen as a call for attention when performing experiments and claiming to produce certain mixed states via computable mixings.

    \begin{acknowledgments}
This work is supported by the Argentinian ANPCyT (PICT-2011-0365), the Laboratoire International Associée INFINIS, the ERC CoG QITBOX, the AXA Chair in Quantum Information Science, the Spanish MINECO (QIBEQI FIS2016-80773-P and Severo Ochoa SEV-2015-0522), Generalitat de Catalunya (CERCA Programme) and Fundaci\'{o} Privada Cellex.
The authors would also like to thank Laura Knoll and Christian Schmiegelow for fruitful discussions.
    \end{acknowledgments}

\appendix

\section{Appendix 1: A POVM for the generalized distinguishing protocol}

For completeness, in this section we describe an informationally complete POVM $\{E_i\}_{i\leq N_d}$ satisfying
\eqref{eqn:unbiasedness}. We construct it from the following $N_d:=d(2d-1)$ projectors
\begin{align*}
  \Pi^{(a)}_m &:= \ket{m}\bra{m},\\
  \Pi^{(b\pm)}_{n,m}&:= \frac{1}{2}[\ket{m}\bra{m}\pm\ket{m}\bra{n}\pm\ket{n}\bra{m}+\ket{m}\bra{m}],\\
  \Pi^{(c\pm)}_{n,m}&:= \frac{1}{2}[\ket{m}\bra{m}\mp i\ket{m}\bra{n}\pm i\ket{n}\bra{m}+\ket{m}\bra{m}],
\end{align*}
for all $m<n\leq d$. It is easy to see that:

\begin{enumerate}
\item $\Tr(\Pi^{(a)}_m\frac{\mathbb{I}}{d})=\Tr(\Pi^{(b\pm)}_{n,m}\frac{\mathbb{I}}{d})=\Tr(\Pi^{(c\pm)}_{n,m}\frac{\mathbb{I}}{{d}})=\frac{1}{d}$ and
\item For every density matrix $\rho$ over $\mathbb{C}^d$,
\begin{align*}
\rho_{m,m}&=\Tr(\Pi_{m,n}^{(a)}\rho),\\
\rho_{m,n}&=\frac{1}{2}\left[\Tr(\Pi^{(b+)}_{m,n}\rho)-\Tr(\Pi^{(b-)}_{m,n}\rho)\right.\\
&\left. \quad+i(\Tr(\Pi^{(c-)}_{m,n}\rho)-\Tr(\Pi^{(c+)}_{m,n}\rho)\right],\mbox{ for }m\neq n.
\end{align*}
\end{enumerate}

Finally, since $$\sum_{n,m}[\Pi^{(a)}_{n,m}+\Pi^{(b\pm)}_{n,m}+\Pi^{(c\pm)}_{n,m}]=(2d-1)\mathbb{I},$$ by normalizing these projectors with $1/(2d-1)$ we get the the effects $E_i$ of a POVM with the
desired characteristics.

\section{Appendix 2: Probability of success of Algorithm \ref{alg:distinguishNoise}}

We need to bound the
number of sequences that have a Hamming distance smaller than
$qk\ell$ from a computable one. One possible bound is $2^\ell
\binom{\ell k}{\left\lfloor q\ell k \right\rfloor} 2^{\left\lfloor q\ell k
\right\rfloor}$, where the first exponential term counts the
number of different programs of length $\ell$, the combinatorial
number corresponds to the number of bits that can be flipped due
to errors, and the last exponential term gives which of these bits
are actually being flipped. This estimation may not be tight, as
we may be counting the same sequence several times. However, using
this estimation we derive a sensible upper bound for the final
error probability, as we get

\begin{equation}
 P_{{\rm err}}<\sum_{\ell >0}\frac{2^\ell 2^{\left\lfloor q\ell k \right\rfloor}\binom{\ell k}{\left\lfloor q\ell k \right\rfloor}}{2^{\ell k}}
\end{equation}
If we consider that $q<1/2$, we can remove the integer part
function and use the generalization of combinatorial numbers for
real values. Then, by using that
$\binom{a}{b}\leq\left(\frac{ea}{b}\right)^b$, we obtain

\begin{equation}
  P_{{\rm err}}<\sum_{\ell >0}\left[2^{(1+qk-k)}  \left(\frac{e}{q}\right)^{qk}\right]^\ell.
\end{equation}
This geometric sum can be easily computed yielding

\begin{equation}
 P_{{\rm err}}<\frac{2^{1+qk-k}\left(\frac{e}{q}\right)^{qk}}{1-2^{1+qk-k}\left(\frac{e}{q}\right)^{qk}}.
\end{equation}
Now it can be numerically shown that for $q\lesssim 0.21$ the
probability of mis-recognition tends to zero exponentially with
$k$.
%Or, in other words, if we have less than 21\% of bit flips, the algorithm still gives the correct answer with a success probability as close to $1$ as one wants.

Finally, for completeness, we show that (with probability $1$) Algorithm \ref{alg:distinguishNoise} halts for all inputs satisfying the assumptions. Let $f<q$ be the probability of a bit flip. With probability $1$, we have that for every $\delta$ there exist an
$m_0$ such that for every $m>m_0$ the portion of bit flips in both
$X\uph m$ and $Z\uph m$ are less than $(f+\delta)m$. This
means that if we go to long enough prefixes (or programs), the
portion of bit flips will be less than $q$. And since any
computable sequence is computable by arbitrarily large programs,
this ensures that our algorithm will, at some point, come to an
end.


\begin{thebibliography}{13}%
\makeatletter
\providecommand \@ifxundefined [1]{%
 \@ifx{#1\undefined}
}%
\providecommand \@ifnum [1]{%
 \ifnum #1\expandafter \@firstoftwo
 \else \expandafter \@secondoftwo
 \fi
}%
\providecommand \@ifx [1]{%
 \ifx #1\expandafter \@firstoftwo
 \else \expandafter \@secondoftwo
 \fi
}%
\providecommand \natexlab [1]{#1}%
\providecommand \enquote  [1]{``#1''}%
\providecommand \bibnamefont  [1]{#1}%
\providecommand \bibfnamefont [1]{#1}%
\providecommand \citenamefont [1]{#1}%
\providecommand \href@noop [0]{\@secondoftwo}%
\providecommand \href [0]{\begingroup \@sanitize@url \@href}%
\providecommand \@href[1]{\@@startlink{#1}\@@href}%
\providecommand \@@href[1]{\endgroup#1\@@endlink}%
\providecommand \@sanitize@url [0]{\catcode `\\12\catcode `\$12\catcode
  `\&12\catcode `\#12\catcode `\^12\catcode `\_12\catcode `\%12\relax}%
\providecommand \@@startlink[1]{}%
\providecommand \@@endlink[0]{}%
\providecommand \url  [0]{\begingroup\@sanitize@url \@url }%
\providecommand \@url [1]{\endgroup\@href {#1}{\urlprefix }}%
\providecommand \urlprefix  [0]{URL }%
\providecommand \Eprint [0]{\href }%
\providecommand \doibase [0]{http://dx.doi.org/}%
\providecommand \selectlanguage [0]{\@gobble}%
\providecommand \bibinfo  [0]{\@secondoftwo}%
\providecommand \bibfield  [0]{\@secondoftwo}%
\providecommand \translation [1]{[#1]}%
\providecommand \BibitemOpen [0]{}%
\providecommand \bibitemStop [0]{}%
\providecommand \bibitemNoStop [0]{.\EOS\space}%
\providecommand \EOS [0]{\spacefactor3000\relax}%
\providecommand \BibitemShut  [1]{\csname bibitem#1\endcsname}%
\let\auto@bib@innerbib\@empty
%</preamble>
\bibitem [{\citenamefont {Bendersky}\ \emph {et~al.}(2016)\citenamefont
  {Bendersky}, \citenamefont {de~la Torre}, \citenamefont {Senno},
  \citenamefont {Figueira},\ and\ \citenamefont {Ac{\'\i}n}}]{Bendersky2016}%
  \BibitemOpen
  \bibfield  {author} {\bibinfo {author} {\bibfnamefont {A.}~\bibnamefont
  {Bendersky}}, \bibinfo {author} {\bibfnamefont {G.}~\bibnamefont {de~la
  Torre}}, \bibinfo {author} {\bibfnamefont {G.}~\bibnamefont {Senno}},
  \bibinfo {author} {\bibfnamefont {S.}~\bibnamefont {Figueira}}, \ and\
  \bibinfo {author} {\bibfnamefont {A.}~\bibnamefont {Ac{\'\i}n}},\ }\href@noop
  {} {\bibfield  {journal} {\bibinfo  {journal} {Physical Review Letters}\
  }\textbf {\bibinfo {volume} {116}},\ \bibinfo {pages} {230402} (\bibinfo
  {year} {2016})}\BibitemShut {NoStop}%
\bibitem [{\citenamefont {Prevedel}\ \emph {et~al.}(2007)\citenamefont
  {Prevedel}, \citenamefont {Walther}, \citenamefont {Tiefenbacher},
  \citenamefont {B{\"o}hi}, \citenamefont {Kaltenbaek}, \citenamefont
  {Jennewein},\ and\ \citenamefont {Zeilinger}}]{prevedel2007high}%
  \BibitemOpen
  \bibfield  {author} {\bibinfo {author} {\bibfnamefont {R.}~\bibnamefont
  {Prevedel}}, \bibinfo {author} {\bibfnamefont {P.}~\bibnamefont {Walther}},
  \bibinfo {author} {\bibfnamefont {F.}~\bibnamefont {Tiefenbacher}}, \bibinfo
  {author} {\bibfnamefont {P.}~\bibnamefont {B{\"o}hi}}, \bibinfo {author}
  {\bibfnamefont {R.}~\bibnamefont {Kaltenbaek}}, \bibinfo {author}
  {\bibfnamefont {T.}~\bibnamefont {Jennewein}}, \ and\ \bibinfo {author}
  {\bibfnamefont {A.}~\bibnamefont {Zeilinger}},\ }\href@noop {} {\bibfield
  {journal} {\bibinfo  {journal} {Nature}\ }\textbf {\bibinfo {volume} {445}},\
  \bibinfo {pages} {65} (\bibinfo {year} {2007})}\BibitemShut {NoStop}%
\bibitem [{\citenamefont {Takeda}\ \emph {et~al.}(2013)\citenamefont {Takeda},
  \citenamefont {Mizuta}, \citenamefont {Fuwa}, \citenamefont {van Loock},\
  and\ \citenamefont {Furusawa}}]{takeda2013deterministic}%
  \BibitemOpen
  \bibfield  {author} {\bibinfo {author} {\bibfnamefont {S.}~\bibnamefont
  {Takeda}}, \bibinfo {author} {\bibfnamefont {T.}~\bibnamefont {Mizuta}},
  \bibinfo {author} {\bibfnamefont {M.}~\bibnamefont {Fuwa}}, \bibinfo {author}
  {\bibfnamefont {P.}~\bibnamefont {van Loock}}, \ and\ \bibinfo {author}
  {\bibfnamefont {A.}~\bibnamefont {Furusawa}},\ }\href@noop {} {\bibfield
  {journal} {\bibinfo  {journal} {Nature}\ }\textbf {\bibinfo {volume} {500}},\
  \bibinfo {pages} {315} (\bibinfo {year} {2013})}\BibitemShut {NoStop}%
\bibitem [{\citenamefont {Barrett}\ \emph {et~al.}(2004)\citenamefont
  {Barrett}, \citenamefont {Chiaverini}, \citenamefont {Schaetz}, \citenamefont
  {Britton}, \citenamefont {Itano}, \citenamefont {Jost}, \citenamefont
  {Knill}, \citenamefont {Langer}, \citenamefont {Leibfried}, \citenamefont
  {Ozeri} \emph {et~al.}}]{barrett2004deterministic}%
  \BibitemOpen
  \bibfield  {author} {\bibinfo {author} {\bibfnamefont {M.}~\bibnamefont
  {Barrett}}, \bibinfo {author} {\bibfnamefont {J.}~\bibnamefont {Chiaverini}},
  \bibinfo {author} {\bibfnamefont {T.}~\bibnamefont {Schaetz}}, \bibinfo
  {author} {\bibfnamefont {J.}~\bibnamefont {Britton}}, \bibinfo {author}
  {\bibfnamefont {W.}~\bibnamefont {Itano}}, \bibinfo {author} {\bibfnamefont
  {J.}~\bibnamefont {Jost}}, \bibinfo {author} {\bibfnamefont {E.}~\bibnamefont
  {Knill}}, \bibinfo {author} {\bibfnamefont {C.}~\bibnamefont {Langer}},
  \bibinfo {author} {\bibfnamefont {D.}~\bibnamefont {Leibfried}}, \bibinfo
  {author} {\bibfnamefont {R.}~\bibnamefont {Ozeri}},  \emph {et~al.},\
  }\href@noop {} {\bibfield  {journal} {\bibinfo  {journal} {Nature}\ }\textbf
  {\bibinfo {volume} {429}},\ \bibinfo {pages} {737} (\bibinfo {year}
  {2004})}\BibitemShut {NoStop}%
\bibitem [{\citenamefont {Takesue}\ \emph {et~al.}(2005)\citenamefont
  {Takesue}, \citenamefont {Diamanti}, \citenamefont {Honjo}, \citenamefont
  {Langrock}, \citenamefont {Fejer}, \citenamefont {Inoue},\ and\ \citenamefont
  {Yamamoto}}]{takesue2005differential}%
  \BibitemOpen
  \bibfield  {author} {\bibinfo {author} {\bibfnamefont {H.}~\bibnamefont
  {Takesue}}, \bibinfo {author} {\bibfnamefont {E.}~\bibnamefont {Diamanti}},
  \bibinfo {author} {\bibfnamefont {T.}~\bibnamefont {Honjo}}, \bibinfo
  {author} {\bibfnamefont {C.}~\bibnamefont {Langrock}}, \bibinfo {author}
  {\bibfnamefont {M.}~\bibnamefont {Fejer}}, \bibinfo {author} {\bibfnamefont
  {K.}~\bibnamefont {Inoue}}, \ and\ \bibinfo {author} {\bibfnamefont
  {Y.}~\bibnamefont {Yamamoto}},\ }\href@noop {} {\bibfield  {journal}
  {\bibinfo  {journal} {New Journal of Physics}\ }\textbf {\bibinfo {volume}
  {7}},\ \bibinfo {pages} {232} (\bibinfo {year} {2005})}\BibitemShut {NoStop}%
\bibitem [{\citenamefont {Martin-Löf}(1966)}]{MARTINLOF1966602}%
  \BibitemOpen
  \bibfield  {author} {\bibinfo {author} {\bibfnamefont {P.}~\bibnamefont
  {Martin-Löf}},\ }\href {\doibase
  http://dx.doi.org/10.1016/S0019-9958(66)80018-9} {\bibfield  {journal}
  {\bibinfo  {journal} {Information and Control}\ }\textbf {\bibinfo {volume}
  {9}},\ \bibinfo {pages} {602 } (\bibinfo {year} {1966})}\BibitemShut
  {NoStop}%
\bibitem [{\citenamefont {Church}(1940)}]{church1940concept}%
  \BibitemOpen
  \bibfield  {author} {\bibinfo {author} {\bibfnamefont {A.}~\bibnamefont
  {Church}},\ }\href@noop {} {\bibfield  {journal} {\bibinfo  {journal}
  {Bulletin of the American Mathematical Society}\ }\textbf {\bibinfo {volume}
  {46}},\ \bibinfo {pages} {130} (\bibinfo {year} {1940})}\BibitemShut
  {NoStop}%
\bibitem [{\citenamefont {Ming}\ and\ \citenamefont
  {Vit{\'a}nyi}(2014)}]{LVBook}%
  \BibitemOpen
  \bibfield  {author} {\bibinfo {author} {\bibfnamefont {L.}~\bibnamefont
  {Ming}}\ and\ \bibinfo {author} {\bibfnamefont {P.~M.}\ \bibnamefont
  {Vit{\'a}nyi}},\ }\href@noop {} {\emph {\bibinfo {title} {Kolmogorov
  complexity and its applications}}}\ (\bibinfo  {publisher} {Elsevier},\
  \bibinfo {year} {2014})\ p.\ \bibinfo {pages} {187}\BibitemShut {NoStop}%
\bibitem [{Note1()}]{Note1}%
  \BibitemOpen
  \bibinfo {note} {We work with rational numbers for simplicity, but any other
  computable ordered field (i.e. an ordered field whose field operations
  $(+,\cdot )$ and order relation $\leq $ are computable) would
  do.}\BibitemShut {Stop}%
\bibitem [{\citenamefont {Davis}\ \emph {et~al.}(1994)\citenamefont {Davis},
  \citenamefont {Sigal},\ and\ \citenamefont
  {Weyuker}}]{davis1994computability}%
  \BibitemOpen
  \bibfield  {author} {\bibinfo {author} {\bibfnamefont {M.}~\bibnamefont
  {Davis}}, \bibinfo {author} {\bibfnamefont {R.}~\bibnamefont {Sigal}}, \ and\
  \bibinfo {author} {\bibfnamefont {E.~J.}\ \bibnamefont {Weyuker}},\
  }\href@noop {} {\emph {\bibinfo {title} {Computability, complexity, and
  languages: fundamentals of theoretical computer science}}}\ (\bibinfo
  {publisher} {Newnes},\ \bibinfo {year} {1994})\BibitemShut {NoStop}%
\bibitem [{\citenamefont {Bendersky}\ \emph {et~al.}(2014)\citenamefont
  {Bendersky}, \citenamefont {de~la Torre}, \citenamefont {Senno},
  \citenamefont {Figueira},\ and\ \citenamefont
  {Acin}}]{bendersky2014implications}%
  \BibitemOpen
  \bibfield  {author} {\bibinfo {author} {\bibfnamefont {A.}~\bibnamefont
  {Bendersky}}, \bibinfo {author} {\bibfnamefont {G.}~\bibnamefont {de~la
  Torre}}, \bibinfo {author} {\bibfnamefont {G.}~\bibnamefont {Senno}},
  \bibinfo {author} {\bibfnamefont {S.}~\bibnamefont {Figueira}}, \ and\
  \bibinfo {author} {\bibfnamefont {A.}~\bibnamefont {Acin}},\ }\href@noop {}
  {\bibfield  {journal} {\bibinfo  {journal} {arXiv preprint arXiv:1407.0604}\
  } (\bibinfo {year} {2014})}\BibitemShut {NoStop}%
\bibitem [{\citenamefont {Matsumoto}\ and\ \citenamefont
  {Nishimura}(1998)}]{matsumoto1998mersenne}%
  \BibitemOpen
  \bibfield  {author} {\bibinfo {author} {\bibfnamefont {M.}~\bibnamefont
  {Matsumoto}}\ and\ \bibinfo {author} {\bibfnamefont {T.}~\bibnamefont
  {Nishimura}},\ }\href@noop {} {\bibfield  {journal} {\bibinfo  {journal} {ACM
  Transactions on Modeling and Computer Simulation (TOMACS)}\ }\textbf
  {\bibinfo {volume} {8}},\ \bibinfo {pages} {3} (\bibinfo {year}
  {1998})}\BibitemShut {NoStop}%
\bibitem [{\citenamefont {L{\'o}pez~Grande}\ \emph {et~al.}(2016)\citenamefont
  {L{\'o}pez~Grande}, \citenamefont {Schmiegelow},\ and\ \citenamefont
  {Larotonda}}]{lopezgrande2016autonomous}%
  \BibitemOpen
  \bibfield  {author} {\bibinfo {author} {\bibfnamefont {I.~H.}\ \bibnamefont
  {L{\'o}pez~Grande}}, \bibinfo {author} {\bibfnamefont {C.~T.}\ \bibnamefont
  {Schmiegelow}}, \ and\ \bibinfo {author} {\bibfnamefont {M.~A.}\ \bibnamefont
  {Larotonda}},\ }\href@noop {} {\bibfield  {journal} {\bibinfo  {journal}
  {Papers in Physics}\ }\textbf {\bibinfo {volume} {8}},\ \bibinfo {pages}
  {080002} (\bibinfo {year} {2016})}\BibitemShut {NoStop}%
\end{thebibliography}
\end{document}